\newcommand*{\defeq}{\stackrel{\mathsmaller{\mathrm{def}}}{=}}
\newcommand{\bA}{\bar{A}}
\newcommand{\ba}{\bar{a}}
\newcommand{\bi}{i'}
\newtheorem{theorem}{Theorem}
\newtheorem{lemma}[theorem]{Lemma}
\newtheorem{corollary}[theorem]{Corollary}
\renewcommand\footnoterule{\kern-3pt \hrule width 2in height 0.25pt \kern 2.75pt}
\title{Bellman--Ford is optimal for shortest hop-bounded paths}
\author{Tomasz Kociumaka \and Adam Polak}
\date{\small Max Planck Institute for Informatics, Saarland Informatics Campus, Germany \\ \texttt{\{tomasz.kociumaka,apolak\}@mpi-inf.mpg.de}}
\begin{document}

\maketitle

\begin{abstract}
    This paper is about the problem of finding a shortest $s$-$t$ path using at most $h$~edges in edge-weighted graphs. The Bellman--Ford algorithm solves this problem in $O(hm)$ time, where $m$ is the number of edges. We show that this running time is optimal, up to subpolynomial factors, under popular fine-grained complexity assumptions.
    
    More specifically, we show that under the APSP Hypothesis the problem cannot be solved faster already in undirected graphs with non-negative edge weights. This lower bound holds even restricted to graphs of arbitrary density and for arbitrary $h \in O(\sqrt{m})$. Moreover, under a stronger assumption, namely the Min-Plus Convolution Hypothesis, we can eliminate the restriction $h \in O(\sqrt{m})$. In other words, the $O(hm)$ bound is tight for the entire space of parameters $h$, $m$, and $n$, where $n$ is the number of nodes.
    
    Our lower bounds can be contrasted with the recent near-linear time algorithm for the negative-weight Single-Source Shortest Paths problem, which is the textbook application of the Bellman--Ford algorithm.
\end{abstract}

\section{Introduction}

The Bellman--Ford algorithm~\cite{Shimbel55,Ford56,Bellman58} is the textbook solution for the Single-Source Shortest Paths (SSSP) problem in graphs with negative edge weights. It runs in $O(nm)$ time, where $n$ denotes the number of nodes and $m$ is the number of edges. If we limit the outer for-loop (see Algorithm~\ref{alg:bellman-ford}) to only $h \leqslant n - 1$ iterations, the algorithm computes single-source shortest paths that use at most $h$ edges (or \emph{hops}) and runs in $O(hm)$ time.

\bigskip

\begin{algorithm}[H]
\caption{The Bellman--Ford algorithm.}\label{alg:bellman-ford}
\small
$d^{(0)} \gets [+\infty, +\infty, \ldots, +\infty]$\;
$d^{(0)}[s] \gets 0$\;
\For{$i$ \KwFrom $1$ \KwTo $n - 1$}{
  $d^{(i)} \gets d^{(i-1)}$\;
  \ForEach{{\rm edge} $(u, v) \in E$}{
    $d^{(i)}[v] \gets \min \{ d^{(i)}[v], d^{(i-1)}[u] + w(u, v) \}$\;
  }
}
\end{algorithm}

\clearpage

Negative-weight SSSP has seen a lot of improvements over Bellman--Ford's running time: scaling algorithms~\cite{Gabow85,GabowT89,Goldberg95}, which eventually led to $O(n^{\nicefrac{1}{2}}m \log W)$ running time, where $W$ denotes the maximum absolute value of a negative edge weight; interior-point methods for the more general Minimum-Cost Flow problem, which recently led to an almost-linear $O(m^{1+o(1)} \log W)$ time algorithm~\cite{Chen22}; and finally, the recent combinatorial near-linear $O(m \log^8(n) \log W)$ time algorithm~\cite{BernsteinNW22}.

\begin{center}
\it Can we get similar improvements for the problem of finding shortest hop-bounded paths?
\end{center}

This basic question stays embarrassingly open. Even in undirected graphs with only nonnegative edge weights, Bellman--Ford remains the fastest known algorithm for that problem. In this paper, we give a negative answer to the above question, up to subpolynomial factors, under popular fine-grained complexity assumptions.

\subsection{Our results}

Let us begin with formally stating the computational problem that we study: Given a graph $G=(V,E)$ with edge weights $w : E \to \mathbb{Z}$, two distinguished nodes $s, t \in V$, and a non-negative integer $h \in \mathbb{Z}_{\geqslant 0}$, find (the length of) a shortest path from $s$ to $t$ that uses at most\footnote{We could also consider a variant of the problem asking for a \emph{walk} with \emph{exactly} $h$ edges. It is the harder of the two variants (adding a $0$-length self-loop to node $s$ reduces the ``at most $h$'' variant to the ``exactly $h$'' variant), and we prove the hardness of the easier one already.} $h$ edges. We call such paths \emph{$h$-hop-bounded}, or simply \emph{hop-bounded} when $h$ is implicit in the context.

In this paper, we give two fine-grained reductions, each proving that the $O(hm)$ running time of the Bellman--Ford algorithm is conditionally optimal for the problem, up to subpolynomial factors. Our two hardness results differ from each other in (1) how the parameters $n, m, h$ of the hard instances relate to each other, and (2) which hardness assumption is required. Table~\ref{tab:summary} summarizes these differences.

Our first result holds under the APSP Hypothesis.

\begin{theorem}\label{thm:apsp}
Unless the APSP Hypothesis fails, there is neither $O(h^{1-\varepsilon}m)$ nor $O(hm^{1-\varepsilon})$ time algorithm for finding the length of a shortest $h$-hop-bounded $s$-$t$ path in undirected graphs with nonnegative edge weights, for any $\varepsilon > 0$.

This holds even restricted to instances with density $n=\Theta(\sqrt{m})$ and hop bound $h=\Theta(m^\eta)$ for arbitrarily chosen $\eta \in (0, \nicefrac{1}{2}]$.
\end{theorem}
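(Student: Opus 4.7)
I plan to reduce the APSP-equivalent \emph{Negative Triangle} problem---deciding, for a $k$-vertex weighted graph $G_0$ with weight function $w$, whether some triple $u, v, z$ satisfies $w(u,v) + w(v,z) + w(u,z) < 0$---to single-pair hop-bounded SSSP. Since Negative Triangle requires $k^{3-o(1)}$ time under the APSP Hypothesis, a sufficiently fast hop-bounded SSSP algorithm would refute it.

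I partition $V(G_0)$ into groups of size $s$ chosen as a function of $\eta$, and for each group $S$ build one hop-bounded SSSP instance whose answer equals the minimum weight of a triangle in $G_0$ with at least one vertex in $S$. Summed over the $k/s$ groups, the total time should match the $k^3$ lower bound. The per-group gadget uses a \emph{product-space} construction: its vertex set is (essentially) $S \times V(G_0)$, with an edge $(u, x)$--$(u, y)$ of weight $w(x, y)$, so that any walk preserves the ``anchor'' $u$. A source $s_S$ and target $t_S$ attach via zero-weight edges to $(u, u)$ for each $u \in S$, with ``enter/exit'' node splits that prevent short non-triangle shortcuts. The minimum $s_S$-$t_S$ walk then takes the form $s_S \to (u, u)^{\mathrm{in}} \to (u, v) \to (u, z) \to (u, u)^{\mathrm{out}} \to t_S$ and has weight $w(u, v) + w(v, z) + w(u, z)$---the weight of a triangle anchored at $u$.

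To match the hop bound $h = \Theta(m^\eta)$ while forcing Bellman--Ford-style iteration, I extend the gadget with a chain of ``stretching'' modules that lengthen any valid $s_S$-$t_S$ walk to exactly $h$ hops without altering its weight. These modules are designed so that the resulting graph has $n = \Theta(\sqrt{m})$ and so that no shortcut walk of fewer hops exists; the shortest $h$-hop path must genuinely traverse $h$ edges, so no algorithm can terminate early. With $s$ and the stretching modules tuned to $\eta$, each call costs $\Theta(hm)$ by Bellman--Ford, and the $k/s$ calls combined match $\Theta(k^3)$. A hypothetical $O(h^{1-\varepsilon} m)$- or $O(h m^{1-\varepsilon})$-time algorithm would then collapse this to $O(k^{3-\delta})$ for some $\delta > 0$, contradicting the APSP Hypothesis.

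The main obstacle is designing the stretching modules so that simultaneously (i) they contribute enough edges to maintain $n = \Theta(\sqrt{m})$ density, (ii) they force every $s_S$-$t_S$ walk to use exactly $h$ hops (blocking any algorithm that could beat the $h$-iteration barrier), and (iii) they preserve the semantics that the shortest such walk equals the triangle weight. Simple edge subdivisions achieve (iii) but yield a sparser graph, while ``padding'' via zero-weight self-loops leaves shortcuts that destroy (ii); the delicate task is to combine bundled parallel structures with careful weight scaling so all three constraints hold for every $\eta \in (0, \nicefrac{1}{2}]$.
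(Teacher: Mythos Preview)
This is a plan, not a proof: you yourself flag the ``stretching modules'' as the main obstacle and never construct them, yet without them your gadget has a constant hop bound and establishes nothing for any $\eta>0$. There are also concrete obstructions earlier in the pipeline. The product-space gadget on $S\times V(G_0)$ has $n=\Theta(sk)$ vertices and $m=\Theta(sk^2)$ edges, so $n/\sqrt{m}=\Theta(\sqrt{s})$; the density requirement $n=\Theta(\sqrt{m})$ therefore forces $s=\Theta(1)$, defeating the whole purpose of the grouping. Your edges carry the raw weight $w(x,y)$, which may be negative, while the theorem is about nonnegative weights; and the ``enter/exit split'' you invoke to block shortcuts is a directed device that does not obviously survive in an undirected graph.

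The paper's route is much simpler and sidesteps all of these issues at once. It reduces from an \emph{unbalanced} Negative Triangle instance with $|A|=P=\Theta(N^{2\eta})$ and $|B|=|C|=N$, lays out $A$ and a copy $\bar{A}$ as zero-weight paths, and sets $s=a_1$, $t=\bar{a}_P$, $h=P+2$. The canonical $s$--$t$ path walks along $A$ to some $a_i$, takes three cross edges $a_i - b_j - c_k - \bar{a}_i$, and walks along $\bar{A}$ to $t$, using exactly $P+2$ hops; its length is the triangle weight plus a fixed constant. Index-dependent additive offsets (roughly $+3(P{+}1{-}i)W$ on $A$--$B$ edges and $+3iW$ on $C$--$\bar{A}$ edges) simultaneously make every weight nonnegative and penalize any path that leaves $A$ at index $i$ but re-enters $\bar{A}$ at a different index $i'$, so no stretching gadgets and no node splits are needed. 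The hop parameter is controlled directly by $P$, and the resulting graph already has $n=\Theta(N)=\Theta(\sqrt{m})$ and $m=\Theta(N^2)$.
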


Although the hard instances in \cref{thm:apsp} are dense, one can trivially obtain sparser instances by adding isolated nodes. Indeed, such nodes influence neither the length of a shortest $h$-hop-bounded $s$-$t$ path nor the running time bounds as functions of $h$ and $m$.

\begin{corollary}\label{cor:apsp}
  The result of \cref{thm:apsp} holds even restricted to instances with density $n=\Theta(m^\nu)$ and hop bound $h=\Theta(m^\eta)$ for arbitrarily chosen $\nu \in [\nicefrac{1}{2},1]$ and $\eta \in (0, \nicefrac{1}{2}]$.
\end{corollary}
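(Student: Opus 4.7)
The plan is to carry out the padding argument sketched in the paragraph preceding the corollary. \cref{thm:apsp} already produces hard instances of density $n=\Theta(\sqrt m)$, i.e., $\nu = \nicefrac{1}{2}$. To boost the density up to any prescribed $\nu \in [\nicefrac{1}{2},1]$ without changing the number of edges, I would simply adjoin isolated vertices. Since $m^\nu \geqslant m^{1/2}$ in this range, this is always an \emph{addition} rather than a removal of nodes, which is precisely why the statement restricts to $\nu \geqslant \nicefrac{1}{2}$.

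In more detail: given $\varepsilon > 0$, $\nu \in [\nicefrac{1}{2},1]$, and $\eta \in (0,\nicefrac{1}{2}]$, I fix the hard instance $(G,s,t,h)$ from \cref{thm:apsp} with $|E(G)| = \Theta(m)$, $|V(G)| = \Theta(\sqrt{m})$, and $h = \Theta(m^\eta)$, and I form $G'$ by adding $\Theta(m^\nu)$ fresh isolated vertices. Then $|V(G')| = \Theta(m^\nu)$ and $|E(G')| = \Theta(m)$. Since an isolated vertex cannot appear on any $s$--$t$ walk, the length of the shortest $h$-hop-bounded $s$-$t$ path in $G'$ equals that in $G$.

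A hypothetical algorithm running in time $O(h^{1-\varepsilon} m)$ or $O(hm^{1-\varepsilon})$ on inputs with density $n = \Theta(m^\nu)$ would, when applied to $G'$, solve the original hard instance in the same asymptotic time. The cost of constructing $G'$ from $G$ is $O(m^\nu) \subseteq O(m)$, hence negligible relative to either prospective bound. By \cref{thm:apsp}, this contradicts the APSP Hypothesis.

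This is essentially a one-line reduction, so there is no real obstacle; the only points worth double-checking are that the padding lies in the correct direction (addressed by $\nu \geqslant \nicefrac{1}{2}$) and that the lower bounds inherited from \cref{thm:apsp} are stated purely in terms of $h$ and $m$ (so that increasing $n$ through padding does not weaken them).
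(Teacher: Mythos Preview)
Your proposal is correct and matches the paper's own argument essentially verbatim: the paper proves the corollary by the one-line observation that adding isolated nodes to the hard instances of \cref{thm:apsp} changes $n$ without affecting $m$, $h$, or the answer. The only quibble is terminological---adding isolated nodes makes the graph \emph{sparser}, not denser (larger $\nu$ means fewer edges per node)---but your mathematics is right.
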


\begin{table}
\centering
\small
\begin{tabular}{@{}llllll@{}}
\toprule
 & Density & Hops & Hypothesis \\
\midrule
\cref{cor:apsp}   & $n=\Theta(m^{\nu})$, $\nu\in [\nicefrac12,1]$ & $h=\Theta(m^\eta)$, $\eta\in (0,\nicefrac12]$ & APSP \\
\cref{cor:minplus} & $n=\Theta(m^{\nu})$, $\nu\in [\nicefrac12,1]$ & $h=\Theta(m^\eta)$, $\eta\in [\nicefrac12,\nu]$ & Min-Plus Convolution \\
\bottomrule
\end{tabular}
\caption{Summary and comparison of our conditional hardness results for the problem of finding (the length of) a shortest hop-bounded path between two nodes.}\label{tab:summary}
\end{table}

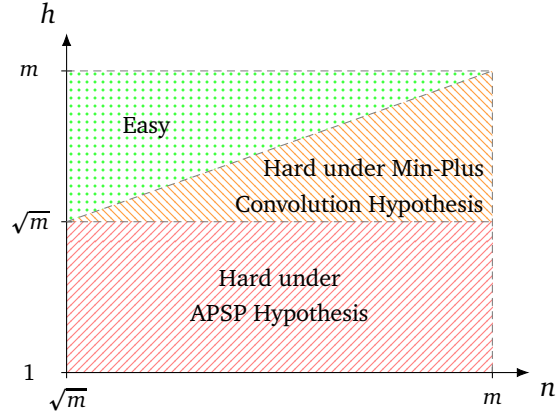
\begin{figure}
\centering
\begin{tikzpicture}[xscale=1.4]

  \fill [pattern color=red!50, pattern=north east lines] (0,0) -- (4,0) -- (4,2) -- (0,2);
  \fill [pattern color=orange!75, pattern=north west lines] (4,4) -- (4,2) -- (0,2);
  \fill [pattern color=green!75, pattern=dots] (4,4) -- (0,4) -- (0,2);

  \draw[-Latex] (-0.05,0) -- (4.35,0) node [below right] {$n$};
  \draw[-Latex] (0,-0.075) -- (0,4.5) node [above left] {$h$};

  \draw[-] (4,-0.075) -- (4,0.075);
  \draw[-] (-0.05,2) -- (0.05,2);
  \draw[-] (-0.05,4) -- (0.05,4);

  \draw[-,densely dashed, black!50] (0,2) -- (4,2);
  \draw[-,densely dashed, black!50] (0,2) -- (4,4);
  \draw[-,densely dashed, black!50] (0,4) -- (4,4) -- (4,0);

  \node [font=\footnotesize] at (0, -0.35) {$\sqrt{m}$};
  \node [font=\footnotesize] at (4, -0.35) {$m$};

  \node [font=\footnotesize] at (-0.35, 0) {$1$};
  \node [font=\footnotesize] at (-0.35, 2) {$\sqrt{m}$};
  \node [font=\footnotesize] at (-0.35, 4) {$m$};

  \node at (0.75,3.25) {\footnotesize Easy};
  \node[align=right,font=\footnotesize] at (2.75,2.45) {Hard under Min-Plus\\ Convolution Hypothesis};
  \node[align=center,font=\footnotesize] at (2,1) {Hard under\\ APSP Hypothesis};

\end{tikzpicture}
\vspace{-1.125em}
\caption{Parameter space. The upper-left triangle represents the case of $h\geqslant n$, where the problem degenerates to the standard Shortest Path problem, without hop bound, which can be solved in $\tilde{O}(m)$ time.}\label{fig:parameters}
\end{figure}

We remark that it is not very surprising that our reduction from APSP can only produce instances with $h \in O(\sqrt{m})$. The conjectured time complexity of APSP is $N^{3 - o(1)} = |\mathrm{input}|^{\nicefrac{3}{2} - o(1)}$. For $h=\Theta(m^\eta)$, the $O(hm)$ time bound is actually $O(|\mathrm{input}|^{1+\eta})$. Fine-grained reductions from problems with smaller complexity to problems with larger complexity are possible (see, e.g.,~\cite{Lincoln20}) but rare, and to our best knowledge no such reduction from APSP is known. If \cref{thm:apsp} worked for $\eta > \nicefrac{1}{2}$, this would be the first such example.

In order to cover the remaining combinations of parameters, we use a stronger hypothesis, concerning a quadratic time problem, namely the Min-Plus Convolution Hypothesis. Since this hypothesis implies the APSP Hypothesis, it is also a sufficient condition for \cref{thm:apsp} and thus gives hardness for the entire parameter space.

\begin{theorem}\label{thm:minplus}
Unless the Min-Plus Convolution Hypothesis fails, there is neither $O(h^{1-\varepsilon}m)$ nor $O(hm^{1-\varepsilon})$ time algorithm for finding the length of a shortest $h$-hop-bounded $s$-$t$ path problem in undirected graphs with nonnegative edge weights, for any $\varepsilon > 0$.

This holds even restricted to instances with density $n=\Theta(m^\eta)$ and hop bound $h=\Theta(m^\eta)$ for arbitrarily chosen $\eta \in [\nicefrac{1}{2}, 1]$.
\end{theorem}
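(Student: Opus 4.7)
I would aim for a fine-grained reduction from $(\min,+)$-convolution. Given sequences $a, b \in \mathbb{Z}^N$, the goal is to build a graph $G$ such that hop-bounded shortest-path queries on $G$ determine the full convolution $c = a \star b$, with parameters tuned so that $n = h = \Theta(m^\eta)$ and $m = \Theta(N^{2/(1+\eta)})$. For such a graph, Bellman--Ford runs in $hm = \Theta(N^2)$, so a hypothetical $O(h^{1-\varepsilon}m + h m^{1-\varepsilon})$ algorithm would compute the convolution in $N^{2-\Omega(\varepsilon)}$ time, contradicting the Min-Plus Convolution Hypothesis.

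The basic building block is a \emph{two-sided staircase}. On the $a$-side, a chain $s = v_0, v_1, \ldots, v_{N-1}$ is linked by zero-weight spine edges, together with ``rung'' edges $v_i \to u$ of weight $a_i$ into a midpoint $u$; the initial segment $s \to v_1 \to \cdots \to v_i \to u$ then uses exactly $i + 1$ hops at cost $a_i$. The $b$-side mirrors this: a chain $u, w_{N-1}, \ldots, w_0 = t$ with rung edges $u \to w_j$ of weight $b_j$. A composed path $s \to v_i \to u \to w_j \to t$ uses $i + j + 2$ hops at weight $a_i + b_j$; after an affine biasing of $a$ and $b$ that forces the induced $c_K$ to be monotone in $K$ (so that ``at most $h$ hops'' behaves like ``exactly $h$ hops''), the shortest $s$-$t$ path under hop bound $h = K + 2$ equals $c_K$ up to a known shift. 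This construction already realizes $n = m = h = \Theta(N)$, which is tight for $\eta = 1$; for smaller $\eta$ I would inflate $m$ up to the target $N^{2/(1+\eta)}$ by adding ``dummy'' edges (e.g., a disconnected dense gadget, or heavy parallel edges between existing vertices) that do not change any $s$-$t$ shortest path.

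\textbf{Main obstacle.} A single hop-bounded shortest-path query returns only one scalar, while $(\min,+)$-convolution has $2N-1$ outputs; running $2N-1$ separate queries --- one per hop bound --- costs $\Theta(N^{3-\varepsilon})$ in total, far too weak to contradict the $N^{2-o(1)}$ barrier. The technical heart of the proof is therefore to \emph{pack} many convolution outputs into few hop-bounded shortest-path queries. One natural route is a powers-of-$M$ encoding in which the graph is arranged so that the shortest $s$-$t$ path has length $\sum_K c_K \cdot M^K$ for a base $M > 2\max_K c_K$, letting a single query recover all $c_K$ as digits in base $M$. Fitting this within the prescribed $O(N^{2/(1+\eta)})$ edges, rather than the naive $\Theta(N^2)$, requires a block decomposition of the index pairs $(i, j)$ --- outer block indices encoded by the hop count, inner offsets by the edges --- and careful sharing of the staircase spines across the $2N - 1$ output slots, all while keeping every weight nonnegative. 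Verifying that this compression faithfully recovers each $c_K$, and that the induced graph meets the exact $(n, m, h)$ targets for every $\eta \in [\nicefrac{1}{2}, 1]$, is the step I expect to be most delicate.
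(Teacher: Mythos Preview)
Your diagnosis of the obstacle is correct, but your proposed remedy---packing all $c_K$ into one path length via a powers-of-$M$ encoding---is the wrong direction and almost certainly cannot be made to work within the edge budget. For $\sum_K c_K M^K$ to be realized as a shortest $s$--$t$ path length, the path must be forced through $\Theta(N)$ sequential gadgets, each independently computing $c_K=\min_{i+j=K}(a_i+b_j)$; each such gadget needs $\Theta(N)$ rung edges, so you are back to $\Theta(N^2)$ edges. Sharing spines across gadgets destroys the independence of the minima. There is also a hop-allocation problem you have not addressed: with a single global bound $h$, the shortest path is free to redistribute hops among the gadgets, and will not spend exactly $K+2$ hops in the $K$-th slot. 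Nothing in your block-decomposition sketch explains how to prevent this.

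The paper sidesteps the ``many outputs'' obstacle altogether by passing to a \emph{decision} formulation. Min-Plus Convolution is subquadratically equivalent to Max-Plus Convolution Upper Bound: given $a,b,c$, decide whether some $c[k]<a[i]+b[j]$ with $i+j=k$. That is a single bit, which one hop-bounded query can answer. The block decomposition is then carried out at the convolution level, not in the graph: split $a,b$ of length $n$ into blocks of length $N$, obtaining $\Theta((n/N)^2)$ subinstances that group into $\Theta(n/N)$ batches of $M=\Theta(n/N)$ pairs $(a_\ell,b_\ell)$ each, all tested against a \emph{common} $c$. One batch reduces to one query: build three length-$N$ spines $u,v,w$ and $M$ midpoints $x_\ell$, with edges $\{u_i,x_\ell\}$ of weight $5W-a_\ell[i]$, edges $\{x_\ell,v_j\}$ of weight $5W-b_\ell[j]$, and edges $\{v_0,w_k\}$ of weight $5W+c[k]$. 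The path $u_0\cdots u_i\,x_\ell\,v_j\cdots v_0\,w_k\cdots w_{N-1}$ has $(N+2)+(i+j-k)$ hops and length $15W+c[k]-a_\ell[i]-b_\ell[j]$, so comparing the answer to $15W$ under hop bound $h=N+2$ decides the batch. This yields $m=\Theta(NM)$, $n=h=\Theta(N)$, and choosing $M=\Theta(N^{(1-\eta)/\eta})$ hits every $\eta\in[\tfrac12,1]$. The idea you are missing is precisely this: switch to the decision version and let the parameter $M$ (number of sequence pairs sharing one $c$) supply the extra edges, rather than trying to extract $\Theta(N)$ scalars from one query.
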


Just like before, we can sparsify the hard instances by adding isolated nodes.

\begin{corollary}\label{cor:minplus}
  The result of \cref{thm:minplus} holds even restricted to instances with density $n=\Theta(m^\nu)$ and hop bound $h=\Theta(m^\eta)$ for arbitrarily chosen $\nu \in [\nicefrac{1}{2},1]$ and $\eta\in [\nicefrac12,\nu]$.
\end{corollary}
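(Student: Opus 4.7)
The plan is to derive this corollary directly from Theorem \ref{thm:minplus} by padding the hard instances with isolated vertices, mirroring the argument that produces Corollary \ref{cor:apsp} from Theorem \ref{thm:apsp}. Fix $\nu \in [\nicefrac12,1]$ and $\eta \in [\nicefrac12, \nu]$. Theorem \ref{thm:minplus}, instantiated at parameter $\eta$, furnishes for every sufficiently large $m$ a hard instance with $m$ edges, $n_0 = \Theta(m^{\eta})$ vertices, and hop bound $h = \Theta(m^\eta)$.

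From such an instance I would construct a new instance by appending $\Theta(m^\nu) - n_0$ fresh isolated vertices. Since $\nu \geqslant \eta$, this quantity is nonnegative, and the resulting graph has precisely $m$ edges, $n = \Theta(m^\nu)$ vertices, the same hop bound $h$, and the same distinguished vertices $s, t$. Because isolated vertices cannot lie on any $s$-$t$ path, the length of a shortest $h$-hop-bounded $s$-$t$ path is unchanged, so an algorithm for the padded instance gives an algorithm for the original instance with only $O(n)$ padding overhead.

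The reduction then completes the argument: both target running times $O(h^{1-\varepsilon}m)$ and $O(hm^{1-\varepsilon})$ depend only on $h$ and $m$, not on $n$. Hence any algorithm achieving either bound on instances of the density claimed by the corollary would, after padding, solve the diagonal instances of Theorem \ref{thm:minplus} within the same asymptotic running time, contradicting that theorem.

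The main---and essentially only---step that needs care is the parameter bookkeeping: one must verify that the range $\eta \in [\nicefrac12, \nu]$ in the corollary is precisely what padding can reach from the diagonal $n = h = \Theta(m^\eta)$. The upper bound $\eta \leqslant \nu$ reflects that padding can only add vertices (so we need $n \geqslant h$ in the target instance), and the lower bound $\eta \geqslant \nicefrac12$ is inherited directly from Theorem \ref{thm:minplus}.
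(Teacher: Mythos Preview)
Your proposal is correct and follows exactly the paper's approach: padding the hard instances from \cref{thm:minplus} with isolated nodes to raise the density from $n=\Theta(m^\eta)$ to $n=\Theta(m^\nu)$, noting that this affects neither the answer nor the running-time bounds in $h$ and $m$. Your parameter bookkeeping (the constraint $\eta\leqslant\nu$ ensuring the padding is nonnegative) is precisely the point the paper leaves implicit.
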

Combining \cref{cor:apsp,cor:minplus}, we cover the entire range of parameters $\nu\in [\nicefrac{1}{2},1]$ and $\eta \in (0,\nu]$ for which
the $O(hm)$ running time is optimal; see \cref{fig:parameters}.

Let us point out that, even though the Bellman--Ford algorithm finds paths from a single source $s$ to all the nodes in the graph, the above two hardness results hold even for the easier problem of finding a single path between two distinguished nodes $s$ and $t$. Moreover, note that any (shortest path) algorithm for directed graphs could also be used for undirected graphs (but not necessarily vice versa). Bellman--Ford works in directed graphs with possibly negative edge weights, while our hardness results already hold for undirected graphs with nonnegative edge weights.

\subsection{Hardness assumptions}

In this section, we briefly recall the two hypotheses that we use in our theorems.

The APSP Hypothesis is an assertion that the All-Pairs Shortest Paths~(APSP) problem cannot be solved in truly subcubic $O(n^{3-\varepsilon})$ time for any $\varepsilon > 0$.
It is one of the three main hypotheses in fine-grained complexity, the other two being the 3SUM Hypothesis and Strong Exponential Time Hypothesis (SETH)~\cite{VVW18}. The APSP Hypothesis is strengthened by the existence of a large class of problems that are equivalent to the APSP problem under subcubic reductions~\cite{WilliamsW18,VVW18}, and by the lack of a truly subcubic algorithm for any of these problems.

In the Min-Plus Convolution problem, we are given two sequences $(a[i])_{i=0}^{n-1}$, $(b[i])_{i=0}^{n-1}$, and the goal is to output sequence $(c[i])_{i=0}^{n-1}$ defined as $c[k] = \max_{i+j=k} (a[i] + b[j])$. So far, only subpolynomial $2^{O(\sqrt{\log n})}$-factor improvements~\cite{BremnerCDEHILPT14, Williams18} over the naive quadratic running time are known. The Min-Plus Convolution Hypothesis~\cite{KunnemannPS17,CyganMWW19} states that the problem cannot be solved in truly subquadratic time, that is, $O(n^{2-\varepsilon})$ for any $\varepsilon > 0$.
Similarly to APSP, there is also a class (albeit smaller) of problems equivalent to Min-Plus Convolution under subquadratic reductions~\cite{CyganMWW19}.\footnote{One of the problems equivalent to Min-Plus Convolution is the Knapsack problem on instances with target value $t=\Theta(n)$. Recall that Knapsack can be solved in $O(nt)$ time by a dynamic programming algorithm due to Bellman~\cite{Bellman1956}. Hence, we can half-jokingly rephrase \cref{thm:minplus} and say that if one Bellman's algorithm is optimal for Knapsack, then the other Bellman's algorithm is optimal for shortest hop-bounded paths.}

The two hypotheses are closely related because APSP is runtime-equivalent (up to constant factors) to the Min-Plus Product problem~\cite{FischerM71}, which is the matrix product analogue of Min-Plus Convolution. There is a reduction from the convolution to the product problem~\cite{BremnerCDEHILPT14}, which entails that the Min-Plus Convolution Hypothesis implies the APSP Hypothesis, and the former is therefore a stronger assumption.

As customary in fine-grained complexity, these hypotheses, as well as all the results in this paper, are stated in the word RAM model of computation with $O(\log n)$-bit machine words. We assume all input numbers fit into single machine words. Alternatively, the APSP Hypothesis is sometimes stated as follows~\cite{VVW18}: For every $\varepsilon > 0$ there is a constant $c$ such that APSP cannot be solved in $O(N^{3 - \varepsilon})$ time in $N$-node graphs with edge weights in $\{-N^c,\ldots,N^c\}$. Our results could also be stated this way because our reductions do not increase weights by more than polynomial factors.

\subsection{Related work}

Hop-bounded paths are studied in various areas related to graph algorithms, e.g., distributed algorithms~\cite{GhaffariHZ21}, dynamic algorithms~\cite{Thorup05,LackiN22}, or even polyhedral combinatorics~\cite{DahlG04}. Problems of finding shortest hop-bounded paths appear, e.g., in the context of quality-of-service (QoS) routing in networks~\cite{BalakrishnanA92, ChenN98}.

Guérin and Orda~\cite{GuerinO02} and Cheng and Ansari~\cite{ChengA04} studied the problem of finding shortest $h$-hop-bounded paths from single source $s$ to all nodes in the graph and for all hop bounds $h \leqslant H$. They proved an $\Omega(Hm)$ lower bound for that problem against so-called \emph{path-comparison-based algorithms}, i.e., algorithms that only access the edge weights by comparing the lengths of two paths. 
Although Dijkstra and Bellman--Ford are known to be path-comparison-based algorithms, algebraic algorithms based on fast matrix multiplication are not.

\paragraph{Bicriteria Path.}

In the Bicriteria Path problem, we are given a graph $G=(V,E)$ with two types of non-negative edge weights $l, c : E \to \mathbb{Z}$, called \emph{lengths} and \emph{costs}, respectively; two \emph{budgets} $L, C \in \mathbb{Z}$; and two distinguished nodes $s, t \in V$. The goal is to find a path from $s$ to~$t$ with the total length at most $L$ and the total cost at most~$C$. Joksch's algorithm~\cite{Joksch66} solves the problem in pseudopolynomial $O(\min(L, C) \cdot m)$ time. For the special case of all edge costs equal to $1$, the Bicriteria Path problem is equivalent to the problem we study in this paper and, moreover, Joksch's algorithm runs in the same time as the Bellman--Ford algorithm.

Abboud, Bringmann, Hermelin, and Shabtay~\cite{AbboudBHS22} proved that, unless SETH fails, there is no algorithm solving the Bicriteria Path problem on sparse graphs (with $m=\Theta(n)$ edges) with budgets $L, C = \Theta(n^\gamma)$ in time $O(n^{1+\gamma-\varepsilon})$ for any $\varepsilon > 0$ and $\gamma > 0$. In other words, they proved that Joksch's algorithm is conditionally optimal, up to subpolynomial factors. Their reduction, however, heavily uses both types of edge weights, and hence it does not imply any lower bound for the special case with unit costs, i.e., for the problem of our interest.

\section{Hardness under APSP Hypothesis}

\paragraph{Preliminaries.}
Given a complete tripartite graph $G=(A \cup B \cup C, E)$ with edge weights $w : E \to \mathbb{Z}$, the Negative Triangle problem asks to find three nodes $a \in A$, $b \in B$, and $c \in C$ with $w(a,b) + w(b,c) + w(c,a) < 0$.
Vassilevska Williams and Williams~\cite{WilliamsW18} proved that APSP and Negative Triangle are equivalent under subcubic reductions.
In particular, unless the APSP Hypothesis fails, there is no $O(N^{3-\varepsilon})$ time algorithm for Negative Triangle with $|A|=|B|=|C|=N$, for any $\varepsilon > 0$. 

Via a by now standard argument, under the same assumption, for any $\varepsilon>0$, there is no $O(N^{\alpha + 2 - \varepsilon})$ time algorithm for the problem restricted to instances with $|A|=\Theta(N^\alpha)$ and $|B|=|C|=N$ for arbitrarily chosen $\alpha \in (0, 1]$. 
Specifically, the reduction partitions the original set $A$ into $\Theta(N^{1-\alpha})$ subsets of size $\Theta(N^\alpha)$ each; the sets $B$ and $C$ are copied to all $\Theta(N^{1-\alpha})$ produced instances. 
A negative triangle exists in the original instance if and only if it exists in at least one of the produced instances. 
Thus, if each of the obtained instances could be solved in $O(N^{\alpha+2-\varepsilon})$ time, then the original instance could be solved in $O(N^{1-\alpha}\cdot (N^2 +N^{\alpha+2-\varepsilon}))=O(N^{3-\alpha}+N^{3-\varepsilon})$ time, violating the APSP Hypothesis.

\paragraph{Reduction.}
We show how to reduce an instance of Negative Triangle, with $|A|=\Theta(N^{\alpha})$ and $|B|=|C|=N$, to finding the minimum length of an $h$-hop-bounded $s$-$t$ path in an undirected graph with $n=\Theta(N)$ nodes, $m=\Theta(N^2)$ edges, and the hop bound $h=\Theta(N^{\alpha})$. 
In order to prove \cref{thm:apsp}, we set $\alpha = 2\eta$ so that $n=\Theta(N)=\Theta(\sqrt{m})$ and $h = \Theta(N^{\alpha})=\Theta(N^{2\eta})=\Theta(m^\eta)$.
An $O(h^{1-\varepsilon} m)$ time (or $O(hm^{1-\varepsilon})$ time) algorithm finding a shortest $h$-hop-bounded $s$-$t$ path would thus yield an $O(N^{\alpha(1-\varepsilon)+2})=O(N^{\alpha + 2 - \alpha \varepsilon})$ time (respectively, $O(N^{\alpha + 2 - 2 \varepsilon})$ time) algorithm for the original instance of the Negative Triangle problem. 
As argued above, no such algorithm exists unless the APSP Hypothesis fails. 

\begin{figure}
\centering
\begin{tikzpicture} 
  \foreach \i/\l in {1/1,2/2,7/{P-1},8/P} {
    \node[circle, draw] (a\i) at (1.5*\i, -2) {};
    \draw (a\i) node[above=1ex]{\footnotesize{$a_{\l}$}};
    \node[circle, draw] (d\i) at (1.5*\i, -8) {};
    \draw (d\i) node[below=1ex]{\footnotesize{$\ba_{\l}$}};
  }
  \foreach \i/\l in {0/1,1/2,2/3,7/{N-2},8/{N-1},9/N} {
    \node[circle, draw] (b\i) at (1.5*\i, -4) {};
    \draw (b\i) node[above=1ex]{\footnotesize{$b_{\l}$}};
    \node[circle, draw] (c\i) at (1.5*\i, -6) {};
    \draw (c\i) node[below=1ex]{\footnotesize{$c_{\l}$}};
  }
  \foreach \t[count=\y] in {a,b,c,d} {
    \foreach \i[count=\x] in {j,i,k} {
      \node[circle, draw] (\t\i) at (3.15+1.8*\x, -2*\y) {};
    }
    \draw ($(\t2)!.5!(\t j)$) node[anchor=mid] {${\cdots}$};
    \draw ($(\t j)!.5!(\t i)$) node[anchor=mid] {${\cdots}$};
    \draw ($(\t i)!.5!(\t k)$) node[anchor=mid] {${\cdots}$};
    \draw ($(\t k)!.5!(\t7)$) node[anchor=mid] {${\cdots}$};
  }


  \draw (ai) node[above=1ex]{\footnotesize{$a_i$}};
  \draw (bj) node[above=1ex]{\footnotesize{$b_j$}};
  \draw (ck) node[below=1ex]{\footnotesize{$c_k$}};
  \draw (di) node[below=1ex]{\footnotesize{$\ba_i$}};

  \foreach \i in {1,2,i,j,k,7,8} {
    \foreach \j in {0,1,2,i,j,k,7,8,9} {
      \path (a\i) edge[dotted,color=gray] (b\j);
      \path (d\i) edge[dotted,color=gray] (c\j);
    }
  }
  \foreach \i in {0,1,2,i,j,k,7,8,9} {
    \foreach \j in {0,1,2,i,j,k,7,8,9} {
      \path (b\i) edge[dotted,color=gray] (c\j);
    }
  }

  \draw (a1) -- node[above] {$0$} (a2);
  \draw (a7) -- node[above] {$0$} (a8);
  \draw (d1) -- node[below] {$0$} (d2);
  \draw (d7) -- node[below] {$0$} (d8);
  \foreach \t in {a,d}{
    \foreach \x/\y in {2/j,j/i,i/k,k/7}{
      \draw (\t\x) -- ($(\t\x)!.33!(\t\y)$);
      \draw (\t\y) -- ($(\t\y)!.33!(\t\x)$);
    }
  }


  \draw (ai) -- node[right] {$w(a_i,b_j) + 3(P+1- i)W$} (bj);
  \draw (bj) -- node[right] {\ $w(b_j,c_k) + 3W$} (ck);
  \draw (ck) -- node[right] {$w(c_k,a_i) + 3iW$} (di);

  \node [circle,fill,text=white,inner sep=2pt] at (a1) {$\mathbf{s}$};
  \node [circle,fill,text=white,inner sep=2pt] at (d8) {$\mathbf{t}$};
\end{tikzpicture}
\caption{The graph created in the reduction from Negative Triangle.}\label{fig:negativetriangle}
\end{figure}
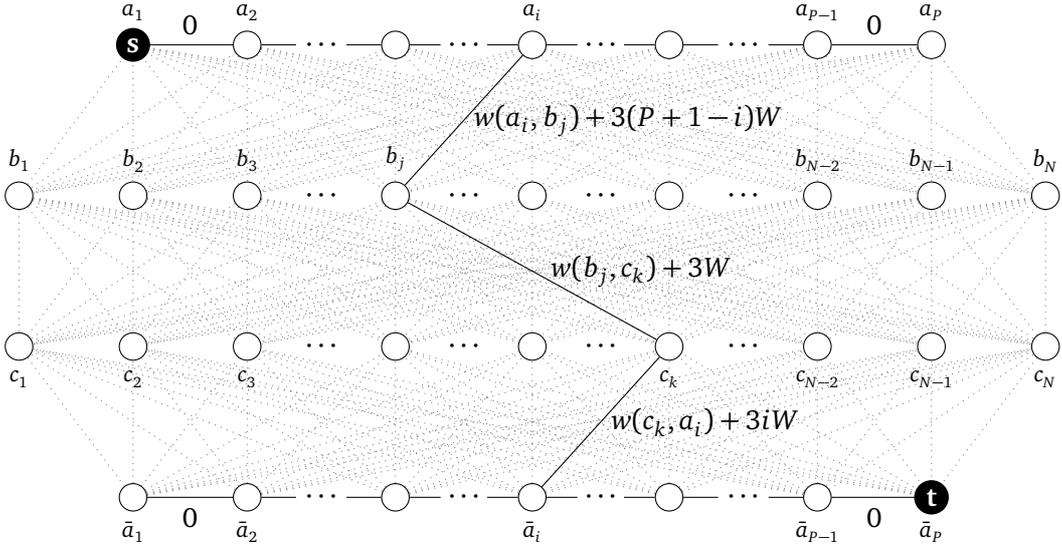

Let $P=|A|$. Suppose that $A=\{a_1,\ldots,a_P\}$, $B=\{b_1,\ldots,b_N\}$, and $C=\{c_1,\ldots, c_N\}$. Moreover, let $W$ denote the maximum absolute value of an edge weight. Create an undirected graph (see Figure~\ref{fig:negativetriangle}) with node set $A\cup B \cup C \cup \bA$, where $\bA=\{\ba_1,\ldots,\ba_P\}$ is a disjoint copy of $A$.
\begin{itemize}[leftmargin=*,nosep]
  \item For every $a_i\in A$ and $b_j\in B$, add edge $\{a_i,b_j\}$ with weight $w(a_i,b_j)+3(P+1-i)W$.
  \item For every $b_j\in B$ and $c_k\in C$, add edge $\{b_j,c_k\}$ with weight $w(b_j,c_k)+3W$.
  \item For every $c_k\in C$ and $a_i\in A$, add edge $\{c_k,\ba_i\}$ with weight $w(c_k,a_i)+3iW$.
  \item For every $i\in \{1,\ldots,P-1\}$, add edges $\{a_i,a_{i+1}\}$ and $\{\ba_i,\ba_{i+1}\}$ with weights $0$.
\end{itemize}

Consider a shortest path in this graph from $s \defeq a_1$ to $t \defeq \ba_P$ using at most $h \defeq P+2$ hops. We claim that its total length is less than $(3P+2)W$ if and only if there is a negative triangle in the initial graph.
Indeed, each triple $(a_i,b_j,c_k)\in A\times B\times C$ corresponds to path $a_1 - a_2 - \cdots - a_i - b_j - c_k - \ba_i- \ba_{i+1} - \cdots - \ba_P$, which uses exactly $P+2$ hops and has total length 
\begin{multline*}
  w(a_i,b_j)+3(P+1-i)W + w(b_j,c_k)+3W + w(c_k,a_i)+3iW \\ = \big( w(a_i,b_j)+ w(b_j,c_k) + w(c_k,a_i) \big) +(3P+2)W.
\end{multline*}
Hence, the ``if'' direction follows. For the ``only if'' direction, fix an $s$-$t$ path with 
at most $P+2$ hops and a total length strictly less than $(3P+2)W$. The path must be of the form 
$a_1 - a_2 - \cdots - a_i - b_j - \cdots - c_k - \ba_{\bi} - \ba_{\bi+1} - \cdots - \ba_{P}$,
where $\{a_i, b_j\}$ is the first edge that leaves $A$ and $\{c_k, \ba_{\bi}\}$ is the last edge that enters $\bA$.
Every edge incident to $b_j$ or $c_k$ has weight at least $-W + 3W=2W$, and the direct edge between $b_j$ and $c_k$ has weight at most $W+3W=2\cdot 2W$. Thus, the direct edge is the cheapest walk from $b_j$ to $c_k$ both in terms of the length and the number of hops. 
Consequently, we may assume without loss of generality that our $s$-$t$ path proceeds directly from $b_j$ to $c_k$.
This means that the number of hops is $i+3+(P-1-\bi)=P+2+i-\bi$, whereas the total length is
\begin{multline*}
  w(a_i,b_j)+3(P+1-i)W + w(b_j,c_k)+3W + w(c_k,a_{\bi})+3\bi W \\ = \big(w(a_i,b_j)+ w(b_j,c_k) + w(c_k,a_{\bi})\big)+(3P+2)W+3(\bi-i)W.
\end{multline*}
If $\bi < i$, then the number of hops is at least $P+3$, which is larger than assumed.
If $\bi > i$, then the path length is at least $-3W + (3P+2)W+3W \geqslant (3P+2)W$, a contradiction.
Therefore, $i=\bi$ holds. Since the path length is less than $(3P+2)W$, we conclude that $w(a_i,b_j)+w(b_j,c_k)+w(c_k,a_i) < 0$, i.e., $(a_i,b_j,c_k)$ is a negative triangle in the initial graph.

\section{Hardness under Min-Plus Convolution Hypothesis}

\paragraph*{Preliminaries.}
In the Max-Plus Convolution Upper Bound problem, we are given three sequences $(a[i])_{i=0}^{n-1}$, $(b[i])_{i=0}^{n-1}$, and $(c[i])_{i=0}^{n-1}$ of $n$ numbers each, and the goal is to decide whether $c[k] \geqslant \max_{i+j=k} (a[i] + b[j])$ holds for all $k$. In other words, we want to find $i, j, k$ such that $i+j=k$ and $c[k] < a[i] + b[j]$. The Max-Plus Convolution Upper Bound and Min-Plus Convolution problems are equivalent under subquadratic reductions~\cite{CyganMWW19}; thus, in particular, unless the Min-Plus Convolution Hypothesis fails, there is no $O(n^{2-\varepsilon})$ time algorithm for Max-Plus Convolution Upper Bound, for any $\varepsilon > 0$.

Let us introduce an intermediate problem, which we call Common Max-Plus Convolution Upper Bound: Given $M$ pairs of sequences
\[\big((a_1[i])_{i=0}^{N-1}, (b_1[i])_{i=0}^{N-1}\big), \ldots, \big((a_M[i])_{i=0}^{N-1}, (b_M[i])_{i=0}^{N-1}\big),\]
and one sequence $(c[i])_{i=0}^{N-1}$, decide if there exist $i, j, k, \ell$ such that $i+j=k$ and $c[k] < a_\ell[i] + b_\ell[j]$. We call such $(i,j,k,\ell)$ a \emph{violating quadruple}. First, we show that the naive running time of $O(MN^2)$ is conditionally optimal for this problem.

\begin{lemma}
Unless the Min-Plus Convolution Hypothesis fails, there is no $O(MN^{2-\varepsilon})$ time algorithm for Common Max-Plus Convolution Upper Bound, for any $\varepsilon > 0$, even when restricted to instances with $M=\Theta(N^\alpha)$ for an arbitrarily chosen constant $\alpha \geqslant 0$.
\end{lemma}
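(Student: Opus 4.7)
The plan is to reduce Max-Plus Convolution Upper Bound of input length $n$ (which, as recalled in the Preliminaries, admits no truly subquadratic algorithm under the Min-Plus Convolution Hypothesis) to $T$ instances of Common Max-Plus Convolution Upper Bound with $M = \Theta(N^\alpha)$. Given an input $(a,b,c)$ of length $n$, I partition $a$ and $b$ into $T$ consecutive blocks $a^{(0)},\ldots,a^{(T-1)}$ and $b^{(0)},\ldots,b^{(T-1)}$ of length $B := \lceil n/T \rceil$ each. A violating index pair $(i,j) = (pB+i',qB+j')$ satisfies $c[(p+q)B + (i'+j')] < a^{(p)}[i'] + b^{(q)}[j']$, and the key observation is that this condition depends on $(p,q)$ only through the diagonal $s := p+q$.

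I group block pairs by $s$: for each $s \in \{0,1,\ldots,T-1\}$ (larger $s$ force the $c$-index outside its domain and can be skipped), I construct one Common instance $\mathcal{I}_s$ whose pairs are $(a^{(p)}, b^{(s-p)})$ for valid $p$, and whose common sequence is $c_s[k] := c[sB+k]$, set to a sufficiently large sentinel value whenever $sB + k \geqslant n$ so that no spurious violation is generated outside the original range. If $\mathcal{I}_s$ has fewer than $T$ such pairs, I pad it with dummy pairs whose entries are a sufficiently negative constant, so that exactly $M := T$ pairs appear and the dummies cannot create violations. A violation exists in the original input if and only if some $\mathcal{I}_s$ has one, so the final answer is the OR over all $s$.

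To land in the parameter regime $M = \Theta(N^\alpha)$, I take $B := \lceil n^{1/(\alpha+1)} \rceil$, giving $T = \Theta(B^\alpha)$, and pad all sequences in each $\mathcal{I}_s$ to a common length $N := 2B$ (using a large negative constant outside the natural block support). The factor-$2$ padding is needed because $i' + j'$ can reach $2B-2$ while Common requires $a_\ell, b_\ell$ and $c$ to share a single length $N$. With this choice, $M = T = \Theta(N^\alpha)$ as required. If Common Max-Plus Convolution Upper Bound were solvable in $O(MN^{2-\varepsilon})$ time, each $\mathcal{I}_s$ would be solved in $O(T N^{2-\varepsilon})$ time, and summing over the $T$ diagonals yields
\[ O(T^{2} N^{2-\varepsilon}) \;=\; O(N^{2\alpha + 2 - \varepsilon}) \;=\; O\bigl(n^{2 - \varepsilon/(\alpha+1)}\bigr), \]
which is truly subquadratic, contradicting the Min-Plus Convolution Hypothesis via its subquadratic equivalence with Max-Plus Convolution Upper Bound recalled above. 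The $O(n)$ setup cost per sub-instance contributes only $O(nT) = o(n^{2})$ and does not affect the bound.

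The only real technical care is in matching definitions cleanly: making all sequences in $\mathcal{I}_s$ have exactly the same length $N$ via the padding above, and arguing that the dummy pairs and out-of-range sentinels neither create nor hide violations (which is immediate once the padding constants exceed $2W$ in magnitude, where $W$ bounds all input numbers). The edge case $\alpha = 0$ (so $M = \Theta(1)$) reduces directly to Max-Plus Convolution Upper Bound on a single instance and needs no partitioning. Beyond this bookkeeping, I do not anticipate any serious obstacle.
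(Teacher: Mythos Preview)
Your proposal is correct and follows essentially the same approach as the paper: block $a$ and $b$ into $\Theta(n^{\alpha/(\alpha+1)})$ pieces of length $\Theta(n^{1/(\alpha+1)})$, and group the resulting sub-instances by the diagonal $s=p+q$ so that each group shares a single $c$-fragment and becomes one Common instance with $M=\Theta(N^\alpha)$. The paper states the same reduction more tersely (deferring the padding details to \cite{CyganMWW19}), while you spell out explicitly the sentinel values, the doubling $N=2B$, and the dummy pairs; these are exactly the ``suitable padding'' the paper alludes to.
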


\begin{proof}
The argument is based on a self-reduction of Min-Plus Convolution (see~\cite[proof of Theorem~5.5]{CyganMWW19}). Let $\beta = \nicefrac{\alpha}{(1+\alpha)} \in [0,1)$. We start with an instance of Max-Plus Convolution Upper Bound with three sequences $a$, $b$, $c$, each of length $n$. We split $a$ and $b$ into $\Theta(n^\beta)$ blocks of consecutive elements, each block of length $\Theta(n^{1-\beta})$. For every pair of blocks, one from $a$ and the other from $b$, we want to check if a corresponding fragment of $c$ is an upper bound of their max-plus convolution. Similarly to~\cite{CyganMWW19}, we add suitable padding so that all three sequences are of the same length. This way, we end up with $\Theta(n^{2\beta})$ smaller instances of Max-Plus Convolution Upper Bound. The key observation is that these instances can be grouped into $\Theta(n^\beta)$ groups, each of size $\Theta(n^\beta)$, such that all instances in a single group share the same third sequence. Each such group becomes a single instance of Common Max-Plus Convolution Upper Bound, with $M=\Theta(n^\beta)$ and $N=\Theta(n^{1-\beta})$. If each of these instances can be solved in $O(MN^{2-\varepsilon}) = O(n^{\beta+(1-\beta)(2-\varepsilon)})$ time, then the original instance can be solved in $O(n^{2\beta+(1-\beta)(2-\varepsilon)})=O(n^{2-(1-\beta)\varepsilon})$ time, and the Min-Plus Convolution Hypothesis fails. We conclude the proof by observing that $n^\beta = n^{\nicefrac{\alpha}{(1+\alpha)}} = (n^{\nicefrac{1}{(1+\alpha)}})^\alpha = (n^{1-\beta})^\alpha$, and thus $M=\Theta(N^\alpha)$ holds as desired.
\end{proof}

Cygan, Mucha, Węgrzycki, and Włodarczyk \cite[proof of Theorem~5.4]{CyganMWW19} showed that w.l.o.g.~the input sequences to Max-Plus Convolution Upper Bound are nonnegative and strictly increasing. The same argument applies to Common Max-Plus Convolution Upper Bound. Using the additional structure, we can replace the condition $i+j=k$ with $i+j \leqslant k$. Indeed, suppose we find $(i,j,k,\ell)$ with $i+j \leqslant k$ and $c[k] < a_\ell[i] + b_\ell[j]$; then, by monotonicity, $c[i+j] \leqslant c[k]$, and hence $(i, j, i+j, \ell)$ is a quadruple satisfying the original condition.

\paragraph{Reduction.}

We show how to reduce an instance of Common Max-Plus Convolution Upper Bound, with $M$ pairs of length-$N$ sequences, to finding the length of a shortest hop-bounded $s$-$t$ path in an undirected graph with $n = \Theta(N+M)$ nodes, $m = \Theta(NM)$ edges, and the hop bound $h = \Theta(N)$. This will let us conclude that an $O(h^{1-\varepsilon}m)$ time (or $O(hm^{1-\varepsilon})$ time) shortest path algorithm would give an $O(MN^{2-\varepsilon})$ time (respectively, $O(M^{1-\varepsilon}N^{2-\varepsilon})$ time) algorithm for the Common Max-Plus Convolution Upper Bound problem and thus violate the Min-Plus Convolution Hypothesis.

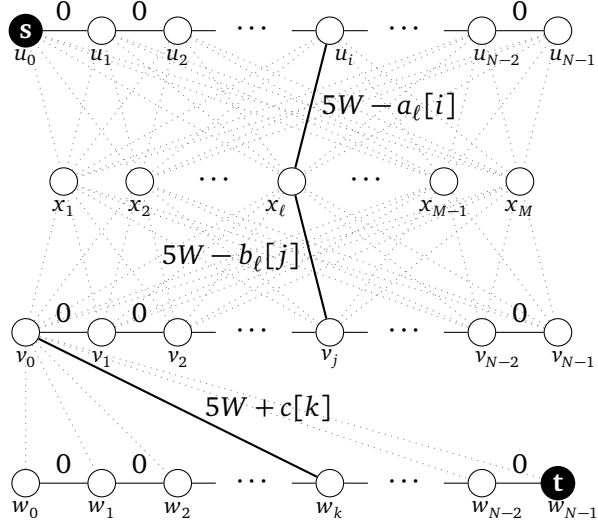
\begin{figure}
\centering
\begin{tikzpicture} 

  \foreach \i in {0,...,2} {
    \node [font=\footnotesize] at (\i,-2.35) {$u_\i$};
    \node [font=\footnotesize] at (\i,-6.35) {$v_\i$};
    \node [font=\footnotesize] at (\i,-8.35) {$w_\i$};
  }

  \node [font=\footnotesize] at (4.2,-2.35) {$u_i$};
  \node [font=\footnotesize] at (4  ,-6.35) {$v_j$};
  \node [font=\footnotesize] at (4  ,-8.35) {$w_k$};

  \node [font=\footnotesize] at (6.2,-2.35) {$u_{N-2}$};
  \node [font=\footnotesize] at (7.2,-2.35) {$u_{N-1}$};
  \node [font=\footnotesize] at (6.2,-6.35) {$v_{N-2}$};
  \node [font=\footnotesize] at (7.2,-6.35) {$v_{N-1}$};
  \node [font=\footnotesize] at (6.2,-8.35) {$w_{N-2}$};
  \node [font=\footnotesize] at (7.2,-8.35) {$w_{N-1}$};

  \node [font=\footnotesize] at (0.5,-4.35) {$x_1$};
  \node [font=\footnotesize] at (1.5,-4.35) {$x_2$};
  \node [font=\footnotesize] at (3.3,-4.35) {$x_\ell$};
  \node [font=\footnotesize] at (5.5,-4.35) {$x_{M-1}$};
  \node [font=\footnotesize] at (6.5,-4.35) {$x_{M}$};
  
  \foreach \l in {1,3,4} {
    
    \foreach \i in {0,...,2} {
        \node [circle,draw] (v\i\l) at (\i,-2*\l) {};
    }
    \node at (3, -2*\l) {$\cdots$};
    \node [circle,draw] (vi\l) at (4,-2*\l) {};
    \node at (5, -2*\l) {$\cdots$};
    \node [circle,draw] (vm\l) at (6,-2*\l) {};
    \node [circle,draw] (vn\l) at (7,-2*\l) {};

    \path (v2\l) edge (2.5,-2*\l);
    \path (vi\l) edge (3.5,-2*\l);
    \path (vi\l) edge (4.5,-2*\l);
    \path (vm\l) edge (5.5,-2*\l);
  }

  \node [circle,draw] (x1) at (0.5,-4) {};
  \node [circle,draw] (x2) at (1.5,-4) {};
  \node                    at (2.5,-4) {$\cdots$};
  \node [circle,draw] (xi) at (3.5,-4) {};
  \node                    at (4.5,-4) {$\cdots$};
  \node [circle,draw] (xm) at (5.5,-4) {};
  \node [circle,draw] (xn) at (6.5,-4) {};

  \foreach \fromA in {v0,v1,v2,vi,vm,vn} {
    \foreach \fromB in {1,3} {
      \foreach \to in {x1,x2,xi,xm,xn} {
        \path (\fromA\fromB) edge[dotted, color=gray] (\to);
      }
    }
  }

  \foreach \to in {v04,v14,v24,vi4,vm4,vn4} {
    \path (v03) edge[dotted, color=gray] (\to);
  }

  \foreach \l in {1,3,4} {

    \path (v0\l) edge node[above] {$0$} (v1\l);
    \path (v1\l) edge node[above] {$0$} (v2\l);
    \path (vm\l) edge node[above] {$0$} (vn\l);
  }
  

  \path (vi1) edge[thick] node[right] {$5W-a_\ell[i]$} (xi);
  \path (vi3) edge[thick] node[left] {$5W-b_\ell[j]$} (xi);
  \path (v03) edge[thick] node[right] {\ \ $5W+c[k]$} (vi4);

  \node [circle,fill,text=white,inner sep=2pt] at (0,-2) {$\mathbf{s}$};
  \node [circle,fill,text=white,inner sep=2pt] at (7,-8) {$\mathbf{t}$};

\end{tikzpicture}
\caption{The graph created in the reduction from Common Max-Plus Convolution Upper Bound.}\label{fig:maxplus}
\end{figure}

Let $W$ denote the maximum value of any input sequence element. Create an undirected graph (see Figure~\ref{fig:maxplus}) composed of three paths $u_0 - u_1 - \cdots - u_{N-1}$, $v_0 - v_1 - \cdots - v_{N-1}$, $w_0 - w_1 - \cdots - w_{N-1}$, and $M$ isolated nodes $x_1,x_2,\ldots,x_M$. Set the weights of all the path edges to $0$.
For every $i \in \{0, 1, \ldots, N-1\}$ and $\ell \in \{1, 2, \ldots, M\}$, add an edge between $u_i$ and $x_\ell$ with weight $5W - a_\ell[i]$. Then, for every $j \in \{0, 1, \ldots, N-1\}$ and $\ell \in \{1, 2, \ldots, M\}$, add an edge between $x_\ell$ and $v_j$ with weight $5W - b_\ell[j]$. Finally, for every $k \in \{0, 1, \ldots, N-1\}$, add an edge between $v_0$ and $w_k$ with weight $5W + c[k]$.

Consider a shortest path in this graph from $s \defeq u_0$ to $t \defeq w_{N-1}$ using at most $h \defeq N+2$ hops. We claim that its total length is less than $15W$ if and only if there is a quadruple~$(i, j, k, \ell)$ with $i+j \leqslant k$ and $c[k] < a_\ell[i] + b_\ell[j]$.

Indeed, each quadruple $(i,j,k,\ell)$ corresponds to path
\begin{equation}
u_0 - u_1 - \cdots - u_i - x_\ell - v_j - v_{j-1} - \cdots - v_0 - w_k - w_{k+1} - \cdots - w_{N-1}.
\tag{$\star$}
\label{path}
\end{equation}
Such a path uses $i + 1  + 1 + j + 1 + (N -1- k) = (N+2) + (i + j - k)$ hops and has total length $15W - a_\ell[i] - b_\ell[j] + c[k]$. Hence, the ``if'' direction follows. For the ``only if'' direction, since every non-zero edge weight is at least $4W$, an $s$–$t$ path of total length less than $15W$ can use at most three such edges, and therefore it must be of the form~(\ref{path}). The hop bound implies $i+j-k \leqslant 0$ and the total length bound implies $c[k] - a_\ell[i] - b_\ell[j] < 0$.

Recall that $n = \Theta(N+M)$, $m = \Theta(NM)$, and $h = \Theta(N)$. For $\eta \in [\nicefrac{1}{2}, 1]$, in order to get hard shortest hop-bounded path instances with density $\Theta(m^\eta)$ and hop bound $h=\Theta(m^\eta)$, we start with the Common Max-Plus Convolution Upper Bound problem restricted to instances with $M = \Theta(N^{\nicefrac{(1-\eta)}{\eta}})$. This implies $h = \Theta(N) = \Theta((NM)^\eta) = \Theta(m ^ \eta)$. Moreover, due to $\eta \geqslant \nicefrac{1}{2}$, we have $M \leqslant O(N)$, and thus $n = \Theta(N + M) = \Theta(N) = \Theta(m^\eta)$ holds as desired. This concludes the proof of \cref{thm:minplus}.

\section*{Acknowledgements}

Adam Polak would like to thank Danupon Nanongkai and Luca Trevisan for bringing his attention to the problem discussed in this paper, Alexandra Lassota for useful feedback on an early draft of the manuscript, and anonymous reviewers for interesting suggestions on extending the results. Part of this work was done at École Polytechnique Fédérale de Lausanne, supported by the Swiss National Science Foundation projects \emph{Lattice Algorithms and Integer Programming} (185030) and \emph{Complexity of Integer Programming} (CRFS-2\_207365).

\bibliographystyle{alphaurl}
\bibliography{main}

\newcommand{\etalchar}[1]{$^{#1}$}
\begin{thebibliography}{CMWW19}

\bibitem[ABHS22]{AbboudBHS22}
Amir Abboud, Karl Bringmann, Danny Hermelin, and Dvir Shabtay.
\newblock {SETH-based} lower bounds for subset sum and bicriteria path.
\newblock {\em {ACM} Trans. Algorithms}, 18(1):6:1--6:22, 2022.
\newblock \href {https://doi.org/10.1145/3450524} {\path{doi:10.1145/3450524}}.

\bibitem[BA92]{BalakrishnanA92}
Anantaram Balakrishnan and Kemal Altinkemer.
\newblock Using a hop-constrained model to generate alternative communication
  network design.
\newblock {\em {INFORMS} J. Comput.}, 4(2):192--205, 1992.
\newblock \href {https://doi.org/10.1287/ijoc.4.2.192}
  {\path{doi:10.1287/ijoc.4.2.192}}.

\bibitem[BCD{\etalchar{+}}14]{BremnerCDEHILPT14}
David Bremner, Timothy~M. Chan, Erik~D. Demaine, Jeff Erickson, Ferran Hurtado,
  John Iacono, Stefan Langerman, Mihai Patrascu, and Perouz Taslakian.
\newblock Necklaces, convolutions, and {X+Y}.
\newblock {\em Algorithmica}, 69(2):294--314, 2014.
\newblock \href {https://doi.org/10.1007/s00453-012-9734-3}
  {\path{doi:10.1007/s00453-012-9734-3}}.

\bibitem[Bel56]{Bellman1956}
Richard Bellman.
\newblock Notes on the theory of dynamic programming {IV} - maximization over
  discrete sets.
\newblock {\em Naval Research Logistics Quarterly}, 3(1-2):67--70, mar 1956.
\newblock \href {https://doi.org/10.1002/nav.3800030107}
  {\path{doi:10.1002/nav.3800030107}}.

\bibitem[Bel58]{Bellman58}
Richard Bellman.
\newblock On a routing problem.
\newblock {\em Quarterly of Applied Mathematics}, 16(1):87--90, 1958.
\newblock \href {https://doi.org/10.1090/qam/102435}
  {\path{doi:10.1090/qam/102435}}.

\bibitem[BNW22]{BernsteinNW22}
Aaron Bernstein, Danupon Nanongkai, and Christian Wulff{-}Nilsen.
\newblock Negative-weight single-source shortest paths in near-linear time.
\newblock In {\em 63rd {IEEE} Annual Symposium on Foundations of Computer
  Science, {FOCS} 2022}, pages 600--611. {IEEE}, 2022.
\newblock \href {https://doi.org/10.1109/FOCS54457.2022.00063}
  {\path{doi:10.1109/FOCS54457.2022.00063}}.

\bibitem[CA04]{ChengA04}
Gang Cheng and Nirwan Ansari.
\newblock Finding all hops shortest paths.
\newblock {\em {IEEE} Commun. Lett.}, 8(2):122--124, 2004.
\newblock \href {https://doi.org/10.1109/LCOMM.2004.823365}
  {\path{doi:10.1109/LCOMM.2004.823365}}.

\bibitem[CKL{\etalchar{+}}22]{Chen22}
Li~Chen, Rasmus Kyng, Yang~P. Liu, Richard Peng, Maximilian Probst~Gutenberg,
  and Sushant Sachdeva.
\newblock Maximum flow and minimum-cost flow in almost-linear time.
\newblock In {\em 63rd {IEEE} Annual Symposium on Foundations of Computer
  Science, {FOCS} 2022}, pages 612--623. {IEEE}, 2022.
\newblock \href {https://doi.org/10.1109/FOCS54457.2022.00064}
  {\path{doi:10.1109/FOCS54457.2022.00064}}.

\bibitem[CMWW19]{CyganMWW19}
Marek Cygan, Marcin Mucha, Karol Węgrzycki, and Michał Włodarczyk.
\newblock On problems equivalent to (min, +)-convolution.
\newblock {\em {ACM} Trans. Algorithms}, 15(1):14:1--14:25, 2019.
\newblock \href {https://doi.org/10.1145/3293465} {\path{doi:10.1145/3293465}}.

\bibitem[CN98]{ChenN98}
Shigang Chen and Klara Nahrstedt.
\newblock An overview of quality of service routing for next-generation
  high-speed networks: problems and solutions.
\newblock {\em {IEEE} Netw.}, 12(6):64--79, 1998.
\newblock \href {https://doi.org/10.1109/65.752646}
  {\path{doi:10.1109/65.752646}}.

\bibitem[DG04]{DahlG04}
Geir Dahl and Luis Eduardo~Neves Gouveia.
\newblock On the directed hop-constrained shortest path problem.
\newblock {\em Oper. Res. Lett.}, 32(1):15--22, 2004.
\newblock \href {https://doi.org/10.1016/S0167-6377(03)00026-9}
  {\path{doi:10.1016/S0167-6377(03)00026-9}}.

\bibitem[FM71]{FischerM71}
Michael~J. Fischer and Albert~R. Meyer.
\newblock Boolean matrix multiplication and transitive closure.
\newblock In {\em 12th Annual Symposium on Switching and Automata Theory},
  pages 129--131. {IEEE} Computer Society, 1971.
\newblock \href {https://doi.org/10.1109/SWAT.1971.4}
  {\path{doi:10.1109/SWAT.1971.4}}.

\bibitem[For56]{Ford56}
Lester~Randolph Ford.
\newblock Network flow theory.
\newblock Technical report, RAND Corporation, 1956.
\newblock URL: \url{https://www.rand.org/pubs/papers/P923.html}.

\bibitem[Gab85]{Gabow85}
Harold~N. Gabow.
\newblock Scaling algorithms for network problems.
\newblock {\em J. Comput. Syst. Sci.}, 31(2):148--168, 1985.
\newblock \href {https://doi.org/10.1016/0022-0000(85)90039-X}
  {\path{doi:10.1016/0022-0000(85)90039-X}}.

\bibitem[GHZ21]{GhaffariHZ21}
Mohsen Ghaffari, Bernhard Haeupler, and Goran Zuzic.
\newblock Hop-constrained oblivious routing.
\newblock In {\em 53rd Annual {ACM} {SIGACT} Symposium on Theory of Computing,
  STOC 2021}, pages 1208--1220. {ACM}, 2021.
\newblock \href {https://doi.org/10.1145/3406325.3451098}
  {\path{doi:10.1145/3406325.3451098}}.

\bibitem[GO02]{GuerinO02}
Roch Gu{\'{e}}rin and Ariel Orda.
\newblock Computing shortest paths for any number of hops.
\newblock {\em {IEEE/ACM} Trans. Netw.}, 10(5):613--620, 2002.
\newblock \href {https://doi.org/10.1109/TNET.2002.803917}
  {\path{doi:10.1109/TNET.2002.803917}}.

\bibitem[Gol95]{Goldberg95}
Andrew~V. Goldberg.
\newblock Scaling algorithms for the shortest paths problem.
\newblock {\em {SIAM} J. Comput.}, 24(3):494--504, 1995.
\newblock \href {https://doi.org/10.1137/S0097539792231179}
  {\path{doi:10.1137/S0097539792231179}}.

\bibitem[GT89]{GabowT89}
Harold~N. Gabow and Robert~Endre Tarjan.
\newblock Faster scaling algorithms for network problems.
\newblock {\em {SIAM} J. Comput.}, 18(5):1013--1036, 1989.
\newblock \href {https://doi.org/10.1137/0218069} {\path{doi:10.1137/0218069}}.

\bibitem[Jok66]{Joksch66}
H.~C. Joksch.
\newblock The shortest route problem with constraints.
\newblock {\em Journal of Mathematical Analysis and Applications},
  14(2):191--197, 1966.
\newblock \href {https://doi.org/10.1016/0022-247X(66)90020-5}
  {\path{doi:10.1016/0022-247X(66)90020-5}}.

\bibitem[KPS17]{KunnemannPS17}
Marvin K{\"{u}}nnemann, Ramamohan Paturi, and Stefan Schneider.
\newblock On the fine-grained complexity of one-dimensional dynamic
  programming.
\newblock In {\em 44th International Colloquium on Automata, Languages, and
  Programming, {ICALP} 2017}, volume~80 of {\em LIPIcs}, pages 21:1--21:15.
  Schloss Dagstuhl - Leibniz-Zentrum f{\"{u}}r Informatik, 2017.
\newblock \href {https://doi.org/10.4230/LIPIcs.ICALP.2017.21}
  {\path{doi:10.4230/LIPIcs.ICALP.2017.21}}.

\bibitem[LN22]{LackiN22}
Jakub \L{}\k{a}cki and Yasamin Nazari.
\newblock Near-optimal decremental hopsets with applications.
\newblock In {\em 49th International Colloquium on Automata, Languages, and
  Programming, {ICALP} 2022}, volume 229 of {\em LIPIcs}, pages 86:1--86:20.
  Schloss Dagstuhl - Leibniz-Zentrum f{\"{u}}r Informatik, 2022.
\newblock \href {https://doi.org/10.4230/LIPIcs.ICALP.2022.86}
  {\path{doi:10.4230/LIPIcs.ICALP.2022.86}}.

\bibitem[LPW20]{Lincoln20}
Andrea Lincoln, Adam Polak, and Virginia~Vassilevska Williams.
\newblock Monochromatic triangles, intermediate matrix products, and
  convolutions.
\newblock In {\em 11th Innovations in Theoretical Computer Science Conference,
  {ITCS} 2020}, volume 151 of {\em LIPIcs}, pages 53:1--53:18. Schloss Dagstuhl
  - Leibniz-Zentrum f{\"{u}}r Informatik, 2020.
\newblock \href {https://doi.org/10.4230/LIPIcs.ITCS.2020.53}
  {\path{doi:10.4230/LIPIcs.ITCS.2020.53}}.

\bibitem[Shi55]{Shimbel55}
Alfonso Shimbel.
\newblock Structure in communication nets.
\newblock In {\em Symposium on Information Networks, 1954}, pages 199--203.
  Polytechnic Press of the Polytechnic Institute of Brooklyn, 1955.

\bibitem[Tho05]{Thorup05}
Mikkel Thorup.
\newblock Worst-case update times for fully-dynamic all-pairs shortest paths.
\newblock In {\em 37th Annual {ACM} Symposium on Theory of Computing, STOC
  2005}, pages 112--119. {ACM}, 2005.
\newblock \href {https://doi.org/10.1145/1060590.1060607}
  {\path{doi:10.1145/1060590.1060607}}.

\bibitem[VW18]{VVW18}
Virginia Vassilevska~Williams.
\newblock On some fine-grained questions in algorithms and complexity.
\newblock In {\em International Congress of Mathematicians, ICM 2018}, pages
  3447--3487, 2018.
\newblock \href {https://doi.org/10.1142/9789813272880_0188}
  {\path{doi:10.1142/9789813272880_0188}}.

\bibitem[VWW18]{WilliamsW18}
Virginia Vassilevska~Williams and R.~Ryan Williams.
\newblock Subcubic equivalences between path, matrix, and triangle problems.
\newblock {\em J. {ACM}}, 65(5):27:1--27:38, 2018.
\newblock \href {https://doi.org/10.1145/3186893} {\path{doi:10.1145/3186893}}.

\bibitem[Wil18]{Williams18}
R.~Ryan Williams.
\newblock Faster all-pairs shortest paths via circuit complexity.
\newblock {\em {SIAM} J. Comput.}, 47(5):1965--1985, 2018.
\newblock \href {https://doi.org/10.1137/15M1024524}
  {\path{doi:10.1137/15M1024524}}.

\end{thebibliography}

\end{document}